\newtheorem*{theorem}{Theorem}
\newtheorem*{proposition}{Proposition}
\newtheorem*{lemma}{Lemma}
\newtheorem*{remark}{Remark}
\newtheorem*{corollary}{Corollary}
\def\eps{\varepsilon}
\title{Compressibility and probabilistic proofs}
\author{Alexander Shen\thanks{%
LIRMM CNRS \& University of Montpellier. On leave from IITP RAS, Moscow, Russia. E-mail address: \hbox{\texttt{alexander.shen@lirmm.fr}}. Supported by ANR-15-CE40-0016-01 RaCAF grant.}}
\begin{document}
\maketitle

\begin{abstract}
We consider several examples of probabilistic existence proofs using  compressibility arguments, including some results that involve Lov\'asz local lemma.
\end{abstract}

\section{Probabilistic proofs: a toy example}

There are many well known probabilistic proofs that objects with some properties exist. Such a proof estimates the probability for a random object to violate the requirements and shows that it is small (or at least strictly less than $1$). Let us look at a toy example.

Consider a $n\times n$ Boolean matrix and its $k\times k$ minor (the intersection of $k$ rows and $k$ columns chosen arbitrarily). We say that the minor is \emph{monochromatic} if all its elements are equal (either all zeros or all ones). 

\begin{proposition}
For large enough $n$ and for $k=O(\log n)$, there exists a $(n\times n)$-matrix that does not contain a monochromatic $(k\times k)$-minor.
\end{proposition}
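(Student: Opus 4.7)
The plan is to prove the proposition via a compressibility (Kolmogorov complexity) argument rather than a direct union bound. The key observation is that a matrix containing a monochromatic minor carries redundancy: the $k^2$ bits of the minor are determined by a single bit (the common color), and this redundancy can be converted into a shorter description.

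First I would recall the standard counting fact: since there are only $2^{n^2}-1$ binary strings of length less than $n^2$, at least one $n\times n$ Boolean matrix $M$ has Kolmogorov complexity $C(M) \ge n^2$. Fix such an incompressible $M$; it suffices to show $M$ contains no monochromatic $k\times k$ minor for suitable $k = O(\log n)$.

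Next I would describe an encoding scheme for matrices that do contain a monochromatic $k \times k$ minor. Given such a matrix, one specifies: (i) the $k$ row indices and $k$ column indices of the minor, using at most $2k\lceil \log_2 n\rceil$ bits; (ii) the common color of the minor, using $1$ bit; (iii) the remaining $n^2 - k^2$ entries of the matrix, listed in a fixed order, using $n^2 - k^2$ bits. Adding $O(\log n)$ bits of overhead to make the description self-delimiting and to encode the parameters $n$ and $k$, the total length is bounded by
\[
n^2 - k^2 + 2k\log_2 n + O(\log n).
\]
If this is strictly less than $n^2$, the matrix is compressible. Solving $k^2 > 2k\log_2 n + O(\log n)$ shows that any $k \ge 3\log_2 n$ (say) works for all sufficiently large $n$, so $k = O(\log n)$ suffices.

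Combining the two steps: an incompressible $M$ (which exists) cannot admit such an encoding, hence has no monochromatic $k\times k$ minor. The only delicate point, and the place requiring a little care, is bookkeeping the $O(\log n)$ overhead for self-delimiting encodings of the auxiliary data $(n, k, \text{row and column index lists})$; but since the ``savings'' $k^2 - 2k\log_2 n$ grows like $\Theta(\log^2 n)$ for $k = C\log n$ with $C>2$, any polylogarithmic overhead is absorbed and the argument goes through for large enough $n$.
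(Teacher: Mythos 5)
Your proof is correct and matches the paper's own compression-language argument (the paper gives three equivalent phrasings---probabilistic, counting, and compression---and yours is the third one, spelled out in Kolmogorov-complexity terms with explicit handling of the self-delimiting overhead). Nothing to add.
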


\begin{proof}
We repeat the same simple proof three times, in three different languages.

(Probabilistic language) Let us choose matrix elements using independent tosses of a fair coin. For a given $k$ colums and $k$ rows, the probability of getting a monochromatic minor at their intersection is $2^{-k^2+1}$. (Both zero-minor and one-minor have probability $2^{-k^2}$.) There are at most $n^k$ choices for columns and the same number for rows, so by the union bound the probability of getting at least one monochromatic minor is bounded by 
$$
n^k \times n^k \times 2^{-k^2+1}= 2^{2k\log n - k^2 + 1}=2^{k(2\log n-k)+1}
$$
and the last expression is less then $1$ if, say, $k=3\log n$ and $n$ is suffuciently large.

(Combinatorial language) Let us count the number of bad matrices. For a given choice of columns and rows we have $2$ possibilities for the minor and $2^{n^2-k^2}$ possibilities for the rest, and there is at most $n^k$ choices for raws and columns, so the total number of matrices with monochromatic minor is
$$
n^k \times n^k \times 2\times 2^{n^2-k^2}=2^{n^2+2k\log n-k^2+1}=2^{n^2+k(2\log n - k)+1},
$$
and this is less than $2^{n^2}$, the total number of Boolean $(n\times n)$-matrices.

(Compression language) To specify the matrix that has a monochromatic minor, it is enough to specify $2k$ numbers between $1$ and $n$ (rows and column numbers), the color of the monochromatic minor ($0$ or $1$) and the remaining $n^2-k^2$ bits in the matrix (their positions are already known). So we save $k^2$ bits (compared to the straightforward list of all $n^2$ bits) using $2k\log n+1$ bits instead (each number in the range $1\ldots n$ requires $\log n$ bits; to be exact, we may use $\lceil \log n\rceil$), so we can compress the matrix with a monochromatic minor if $2k\log n+1 \ll k^2$, and not all matrices are compressible. 
\end{proof}

Of course, these three arguments are the same: in the second one we multiply probabilities by $2^{n^2}$, and in the third one we take logarithms.
However, the compression language provides some new viewpoint that may help our intuition.

\section{A bit more interesting example}

In this example we want to put bits (zeros and ones) around the circle in a ``essentially asymmetric'' way: each rotation of the circle should change at least a fixed percentage of bits. More precisely, we are interested in the following statement:

\begin{proposition}
There exists $\eps>0$ such for every suffuciently large $n$ there exists a sequence $x_0 x_1\ldots x_{n-1}$ of bits such that for every $k=1,2,\ldots,n-1$ the cyclic shift by $k$ positions produces a sequence
$$
y_0=x_k, y_1=x_{k+1},\ldots,y_{n-1}=x_{k-1},
$$
that differs from $x$ in at least $\eps n$ positions \textup(the Hamming distance between $x$ and $y$ is at least $\eps n$\textup).
\end{proposition}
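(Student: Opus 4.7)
The plan is to repeat the compression argument from the previous proof. Let $x$ be an incompressible binary string of length $n$ (such strings exist by a counting argument), and suppose for contradiction that for some $k \in \{1, \ldots, n-1\}$ the cyclic shift $y$ of $x$ by $k$ positions differs from $x$ on a set $D \subseteq \{0, \ldots, n-1\}$ of size less than $\eps n$.

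The shift by $k$ partitions $\{0, \ldots, n-1\}$ into $g = \gcd(n, k)$ orbits, each of length $n/g$. Traversing an orbit $i, i+k, i+2k, \ldots$ (mod $n$), consecutive bits of $x$ satisfy $x_{i+(j+1)k} = x_{i+jk} \oplus [\,i+jk \in D\,]$, so the bits along each orbit are determined by a single starting value together with $D$. Hence $x$ can be described by specifying: the shift $k$ ($\lceil \log n \rceil$ bits); the set $D$ of size at most $\eps n$ (at most $\log \binom{n}{\lfloor \eps n\rfloor} + O(\log n) \le n H(\eps) + O(\log n)$ bits, where $H$ is the binary entropy); and one starting bit per orbit ($g$ bits in total). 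For every $k \in \{1, \ldots, n-1\}$ one has $g \le n/2$, with equality only when $n$ is even and $k = n/2$. So the total description has length at most $n/2 + n H(\eps) + O(\log n)$, which is strictly less than $n$ for all large $n$ provided $\eps$ is chosen small enough that $H(\eps) < 1/2$. This contradicts the incompressibility of $x$.

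The main obstacle is the $g$-bit overhead for orbit starting values. For prime $n$ we have $g = 1$ for every admissible $k$, so any $\eps < 1/2$ would work; but the worst case $k = n/2$ for even $n$ forces us to pay up to $n/2$ bits, and so we must pick $\eps$ small enough that $H(\eps) < 1/2$, roughly $\eps < 0.11$. The same bound can of course be recast in probabilistic or combinatorial language, as in the toy example above.
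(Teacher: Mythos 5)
Your proposal is correct and uses a genuinely different decomposition than the paper's. The paper first reduces to $k\le n/2$ (shifts by $k$ and by $n-k$ change the same number of positions), then describes $x$ by the shift $k$, the first $k$ bits verbatim, and the XOR string $x_k\oplus x_0,\ldots,x_{n-1}\oplus x_{n-k-1}$ of length $n-k\ge n/2$ with at most $\eps n$ ones, compressed to $(n-k)H(2\eps,1-2\eps)$ bits; this gives a description shorter than $n$ bits for every $\eps<1/4$. You instead use the cycle structure of the shift on $\mathbb{Z}/n\mathbb{Z}$, spending $g=\gcd(n,k)\le n/2$ seed bits plus the full difference set $D$ compressed to about $nH(\eps)$ bits, with no need to reduce to $k\le n/2$. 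This raw description is redundant by $g$ bits (along each orbit the last XOR bit is forced, since the cycle must close up), which is why you end up needing $H(\eps)<1/2$, i.e.\ roughly $\eps<0.11$, rather than the paper's $\eps<1/4$. Both yield a valid $\eps>0$ and so both prove the proposition; the paper's linear-chain bookkeeping is tighter and avoids $\gcd$s altogether, while your orbit view makes the dependence on $\gcd(n,k)$ transparent (and, as you note, immediately gives $\eps$ close to $1/2$ for prime $n$).
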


\begin{figure}[h]
\begin{center}
\includegraphics[scale=1.0]{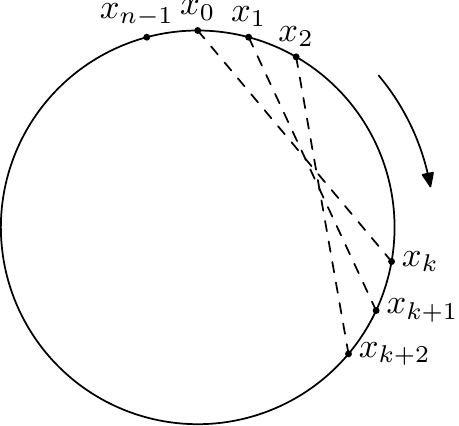}
\end{center}
\caption{A string $x_0\ldots x_{n-1}$ is bad if most of the dotted lines connect equal bits}\label{pic:rotation}
\end{figure}

\begin{proof}
Assume that some rotation (cyclic shift by $k$ positions) transforms $x$ into a string $y$ that coincides almost everywhere with $x$. We may assume that $k\le n/2$: the cyclic shift by $k$ positions changes as many bits as the cyclic shift by $n-k$ (the inverse one). Imagine that we dictate the string $x$ from left to right.  First $k$ bits we dictate normally. But then the bits start to repeat (mostly) the previous ones ($k$ positions before), so we can just say ``the same'' or ``not the same'', and if $\eps$ is small, we know that most of the time we say ``the same''. Technically, we have $\eps n$ different bits, and at least $n-k\ge n/2$ bits to dictate after the first $k$, so the fraction of ``not the same'' signals is at most $2\eps$. It is well known that strings of symbols where some symbols appear more often than others can be encoded efficiently. Shannon tells us that a string with two symbols with frequencies $p$ and $q$ (so $p+q=1$) can be encoded using 
$$
H(p,q)=p \log\frac{1}{p} +q \log \frac{1}{q} 
$$
bits per symbol and that $H(p,q)=1$ only when $p=q=1/2$. In our case, for small $\eps$, one of the frequencies is close to $0$ (at most $2\eps$), and the other one is close to $1$, so $H(p,q)$ is significantly less than $1$. So we get a significant compression for every string that is bad for the theorem, therefore most string are good (so good string do exist).

More precisely, every string $x_0\ldots x_{n-1}$ that does not satisfy the requirements, can be described by 
\begin{itemize}
\item $k$\qquad [$\log n$ bits]
\item $x_0,\ldots,x_{k-1}$\qquad [$k$ bits]
\item $x_k\oplus x_0, x_{k+1}\oplus x_1, \ldots, x_{n-1}\oplus x_{n-k-1}$ \qquad [$n-k$ bits where the fraction of $1$s is at most $2\eps$, compressed to $(n-k)H(2\eps,1-2\eps)$ bits]
\end{itemize}
For $\eps<1/4$ and for large enough $n$ the economy in the third part (compared to $n-k$) is more important than $\log n$ in the first part.
\end{proof}

Of course, this is essentially a counting argument: the number of strings of length $(n-k)$ where the fraction of $1$s is at most $2\eps$, is bounded by $2^{H(2\eps,1-2\eps)(n-k)}$ and we show that the bound for the number of bad strings,
$$
\sum_{k=1}^{n/2} 2^{k} 2^{H(2\eps,1-2\eps)(n-k)}
$$ 
is less than the total number of strings ($2^n$). Still the compression metaphor makes the proof more intuitive, at least for some readers.

\section{Lov\'asz local lemma and\\ Moser--Tardos algorithm}

In our examples of probabilistic proofs we proved the existence of objects that have some property by showing that \emph{most} objects have this property (in other words, that the probability of this property to be true is close to $1$ under some natural distribibution). Not all probabilistic proofs go like that. One of the exceptions is the famous Lov\'asz local lemma~(see, e.g.,~\cite{alon-spencer}). It can be used in the situations where the union bound does not work: we have too many bad events, and the sum of their probabilities exceeds $1$ even if probability of each one is very small. Still Lov\'asz local lemma shows that these bad events do not cover the probability space entirely, assuming that the bad events are ``mainly independent''. The probability of avoiding these bad events is exponentially small, still Lov\'asz local lemma provides a positive lower bound for it. 

This means, in particular, that we cannot hope to construct an object satisfying the requirements by random trials, so the bound provided by Lov\'asz local lemma does not give us a randomized algorithm that constructs the object with required properties with probability close to $1$. Much later Moser and Tardos~\cite{moser,moser-tardos} suggested such an algorithm --- in fact a very simple one. In other terms, they suggested a different distribution under which good objects form a majority. 

We do not discuss the statement of Lov\'asz local lemma and Moser--Tardos algorithm in general. Instead, we provide two examples when they can be used, and the compression-language proofs that can be considered as ad hoc versions of Moser--Tardos argument. These two examples are (1)~satisfiability of formulas in conjunctive normal form (CNF) and (2)~strings without forbidden factors.

\section{Satisfiable CNF}

A CNF (\emph{conjunctive normal form}) is a propositional formula that is a conjuction of \emph{clauses}. Each clause is a disjunction of \emph{literals}; a literal is a propositional variable or its negation. For example, CNF
$$
(\lnot p_1 \lor p_2 \lor p_4)\land (\lnot p_2 \lor p_3\lor \lnot p_4)
$$
consists of two clauses. First one prohibits the case when
$p_1 = \textsc{true}$, $p_2=\textsc{false}$, $p_4=\textsc{false}$; the second one prohibits the case when $p_2=\textsc{true}$, $p_3=\textsc{false}$, $p_4=\textsc{true}$. A CNF is \emph{satisfiable} if it has a \emph{satisfying assigment} (that makes all clauses true, avoiding the prohibited combinations). In our example there are many satisfying assigments. For example, if $p_1=\textsc{false}$ and $p_3=\textsc{true}$, all values of other variables are OK.

We will consider CNF where all clauses include $n$ literals with $n$ different variables (from some pool of variables that may contain much more than $n$ variables). For a random assignment (each variable is obtained by an independent tossing of a fair coin) the probability to violate a clause of this type is $2^{-n}$ (one of $2^n$ combinations of values for $n$ variables is forbidden). Therefore, \emph{if the number of clauses of this type is less than $2^n$, then the formula is satisfiable}. This is a tight bound: using $2^n$ clauses with the same variables, we can forbid all the combinations and get an unsatisfiable CNF.

The following result says that we can guarantee the satisfiability for formuli with much more clauses. In fact, the total number of clauses may be arbitrary (but still we consider finite formulas, of course). The only thing we need is the ``limited dependence'' of clauses. Let us say that two clauses are \emph{neighbors} if they have a common variable (or several common variables). The clauses that are not neighbors correspond to independent events (for a random assignment). The following statement says that if the number of neighbors of each clause is bounded, then CNF is guaranteed to be satisfisable.

\begin{proposition}
Assume that each clause in some CNF contains $n$ literals with different variables and has at most $2^{n-3}$ neighbor clauses. Then the CNF is satisfiable.
\end{proposition}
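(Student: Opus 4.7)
The plan is to adapt the Moser--Tardos resampling algorithm and prove its termination by a compression argument. Consider the following procedure: start from a uniformly random truth assignment; then, while some clause is violated, pick one (by a fixed canonical rule, say the smallest index) and resample all $n$ of its variables with fresh independent fair bits. If this procedure halts on some random tape, the resulting assignment satisfies every clause, so it suffices to exhibit a single random tape on which it halts.

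Suppose instead the procedure performs $T$ resamplings without halting. Then it has consumed exactly $N + Tn$ independent fair bits, where $N$ is the total number of variables. The key observation is that these bits are uniquely determined by (a)~the current assignment ($N$ bits) and (b)~the ``log'' listing, in order, the clauses that were resampled. Reconstruction walks the log in reverse: given the state just after step $i$ and the clause $C_i$ resampled at step $i$, the state just before step $i$ must have had the variables of $C_i$ set to the unique pattern forbidden by $C_i$ (because $C_i$ was violated then) and all other variables unchanged. Iterating backwards recovers the initial assignment and all resampled bits.

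The next step is to encode the log using the neighbor bound. I would use the recursive form of the algorithm: $\mathrm{FIX}(C)$ resamples the variables of $C$ and then, in canonical order, calls $\mathrm{FIX}(C')$ for each $C' \in \{C\}\cup\mathrm{neighbors}(C)$ still violated, while the outer loop calls $\mathrm{FIX}$ on the first violated clause. The log then naturally becomes a forest of rooted recursion trees in which every non-root node is one of at most $2^{n-3}+1$ neighbors of its parent, so needs at most $n-2$ bits, while the shape of each tree costs $O(1)$ bits per node via a standard Dyck-path encoding. The saving over writing $n$ fresh random bits per resampling is roughly three bits per non-root call --- precisely the slack $2^{n-3}$ in the hypothesis.

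The main obstacle is that the root of each recursion tree costs $\log M$ bits to name, where $M$ is the total number of clauses, and this is a priori unrestricted, so the overhead per tree could in principle wipe out the per-step saving. This is where the argument really uses the local-lemma structure: the log is not an arbitrary labelled forest but must be consistent with the algorithm (the clause named at each $\mathrm{FIX}$ call must actually be violated at that moment), and taking this consistency constraint into account, a careful accounting over the possible shapes and labellings of the forest shows that the number of admissible $T$-step logs grows strictly slower than $2^{Tn}$ once $T$ is large enough. Hence some random tape of length $N+Tn$ admits no such log, meaning the algorithm has halted on it, which yields the desired satisfying assignment.
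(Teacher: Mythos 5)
Your overall strategy is the same as the paper's: run the resampling procedure, show that a non-terminating run lets you reconstruct the random tape from the final assignment plus a ``log'' of the recursion, and then observe that the log is too short, so the tape would be compressible. The backward-reconstruction step is correct and matches the paper's Lemma. But two points that you leave vague are precisely where the one-bit margin lives, and both need to be nailed down.

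First, the ``main obstacle'' you raise about paying $\log M$ bits for each tree root is not really an obstacle, and it is not resolved by appealing to consistency constraints on the log. The outer loop visits each clause at most once, because the post-condition of $\textsc{Fix}$ guarantees that previously satisfied clauses stay satisfied; so the forest has at most $M$ roots, and the total cost of naming them is at most $M\log M$, a constant independent of $T$, which is eventually dwarfed by the per-step saving. (The paper sidesteps even this: it simply analyzes a single call $\textsc{Fix}(C)$ with $C$ known in advance, so the root costs nothing, and invokes the iterative outer loop only in the ``simple observation'' that applying $\textsc{Fix}$ to each clause once suffices.) Your appeal to ``careful accounting over admissible logs'' is the genuine gap --- the actual fix is a one-line counting remark about the number of roots.

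Second, your per-node bookkeeping as written yields no compression. You charge $n-2$ bits for the label (from $2^{n-3}+1$ candidates) plus ``$O(1)$'' for the tree shape; but the Dyck-path cost is exactly $2$ bits per node (one step in, one step out), so you get $n-2+2=n$ bits against $n$ random bits: the saving vanishes, and the claimed ``roughly three bits'' is not there. The paper avoids this by taking the stated convention that a clause is its own neighbor, so the bound $2^{n-3}$ already includes the self-recursion and the label costs $n-3$ bits; combined with $1$ bit for the ``up'' move and $1$ bit reserved for the matching ``down'' move, each resampling costs $n-1$ bits, leaving a clean saving of exactly $1$ bit. Since the whole argument hinges on a one-bit margin, this is not a cosmetic slip: you must either adopt the paper's convention or use a slightly sharper (e.g., arithmetic) code for the $2^{n-3}+1$ alternatives to recover a positive saving.
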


Note that $2^{n-3}$ is a rather tight bound: to forbid all the combinations for some $n$ variables, we need only $2^n$ clauses.

\begin{proof}
It is convenient to present a proof using the compression language, as suggested by Lance Fortnow. Consider the following procedure $\textsc{Fix}(C)$ whose argument is a clause (from our CNF).
\begin{flushleft}
\qquad \{ $C$ is false \}\\
\qquad $\textsc{Fix}(C)$:\\
\qquad\qquad $\textsc{Resample}(C)$\\
\qquad\qquad \textbf{for} all $C'$ that are neighbors of $C$:\\
\qquad\qquad\qquad \textbf{if} $C'$ is false \textbf{then} $\textsc{Fix}(C')$\\
\qquad\{ $C$ is true; other clauses that were true remain true \}
\end{flushleft}
Here $\textsc{Resample}(C)$ is the procedure that assigns fresh random values to all variables in $C$. The pre-condition (the first line) says that the procedure is called only in the situation where $C$ is false. The post-condition (the last line) says that \emph{if the procedure terminates}, then $C$ is true after termination, and, moreover, all other clauses of our CNF that were true before the call remain true. (The ones that were false may be true or false.)

Note that up to now we do not say anything about the termination: note that the procedure is randomized and it may happen that it does not terminate (for example, if all \textsc{Resample} calls are unlucky to choose the same old bad values).

\textbf{Simple observation}: if we have such a procedure, we may apply it to all clauses one by one and after all calls (assuming they terminate and the procedure works according to the specification) we get a satisfying assignment.

\textbf{Another simple observation}: it is easy to prove the ``conditional correctness'' of the procedure $\textsc{Fix}(C)$. In other words, it achieves its goal assuming that (1)~it terminates; (2)~all the recursive calls $\textsc{Fix}(C')$ achieve their goals. It is almost obvious: the $\textsc{Resample}(C)$ call may destroy (=make false) only clauses that are neighbors to $C$, and all these clauses are \textsc{Fix}-ed after that. Note that $C$ is its own neighbor, so the \textbf{for}-loop includes also a recursive call $\textsc{Fix}(C)$, so after all these calls (that terminate and satisfy the post-condition by assumption) the clause $C$ and all its neighbors are true and no other clause is damaged. 

Note that the last argument remains valid even if we delete the only line that really changes something, i.e., the line $\textsc{Resample}(C)$. In this case the procedure never changes anything but still is conditionally correct; it just does not terminate if one of the clauses is false.
 
It remains to prove that the call $\textsc{Fix}(C)$ terminates with high probability. In fact, it terminates with probability $1$  if there are no time limits and with probability exponentially close to $1$ in polynomial time. To prove this, one may use a compression argument: we show that \emph{if the procedure works for a long time without terminating, then the sequence of random bits used for resampling is compressible}. We assume that each call of $\textsc{Resample}()$ uses $n$ fresh bits from the sequence. Finally, we note that this compressibility may happen only with exponentially small probability. 

Imagine that $\textsc{Fix}(C)$ is called and during its recursive execution performs many calls
$$
\textsc{Resample}(C_1),\ldots,\textsc{Resample}(C_N)
$$
(in this order) but does not terminate (yet). We stop it at some moment and examine the values of all the variables.

\begin{lemma}
Knowing the values of the variables after these calls and the sequence $C_1,\ldots,C_N$, we can reconstruct all the $Nn$ random bits used for resampling.
\end{lemma}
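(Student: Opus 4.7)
The plan is to reconstruct the resample bits by running the resampling history \emph{backwards}, using the fact that each clause has $n$ literals over $n$ \emph{distinct} variables, so it has a unique falsifying assignment on those variables.

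First I would fix notation. Let $v^{(i)}$ denote the assignment to all variables immediately after the $i$-th resample, so $v^{(N)}$ is the known final state, and let $v^{(0)}$ be the initial assignment (which we do not necessarily reconstruct). The bits used in $\textsc{Resample}(C_i)$ are, by definition, the values that this call wrote into the variables of $C_i$; since no other call between steps $i$ and $i+1$ modifies variables, these bits are exactly the values of the variables of $C_i$ as they appear in $v^{(i)}$, read in some fixed order determined by $C_i$.

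The key step is then to recover $v^{(i-1)}$ from $v^{(i)}$ and $C_i$. Outside the variables of $C_i$, the two assignments agree (nothing else was touched). On the variables of $C_i$, the assignment $v^{(i-1)}$ is the state at the moment $\textsc{Fix}(C_i)$ was entered, and by the precondition of $\textsc{Fix}$ we know that $C_i$ was false at that moment. Because $C_i$ is a disjunction of $n$ literals in $n$ distinct variables, there is exactly one assignment to those variables that falsifies every literal simultaneously, and this assignment is determined by $C_i$ alone. So $v^{(i-1)}$ is obtained from $v^{(i)}$ by overwriting the variables of $C_i$ with this unique falsifying pattern, with no information loss.

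Iterating from $i=N$ down to $i=1$, I read off $n$ bits at each step (the restriction of $v^{(i)}$ to the variables of $C_i$) and then produce $v^{(i-1)}$ deterministically; this yields all $Nn$ resampling bits. The only thing to be careful about is the variable-ordering convention inside each clause, which can be fixed once and for all from $C_i$, so no extra information is needed. The potential obstacle, namely that a clause might admit several falsifying assignments, is precisely ruled out by the hypothesis that the $n$ literals involve $n$ different variables; this is what makes the ``undo'' step well-defined.
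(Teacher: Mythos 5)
Your argument is correct and is essentially identical to the paper's proof: both reconstruct the history backwards from the final assignment, using the precondition that each $C_i$ was false at the moment of resampling together with the fact that a clause on $n$ distinct variables has exactly one falsifying assignment to those variables. You have merely spelled out the intermediate-state notation $v^{(i)}$ more explicitly than the paper does.
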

 
\begin{proof}[Proof of the lemma] Let us go backwards. By assumption we know the values of all variables after the calls. The procedure $\textsc{Resample}(C_N)$ is called only when $C_N$ is false, and there is only one $n$-tuple of values that makes $C_N$ false. Therefore we know the values of all variables before the last call, and also know the random bits used for the last resampling (since we know the values of variables after resampling).

The same argument shows that we can reconstruct the values of variables before the preceding call $\textsc{Resample}(C_{N-1})$, and random bits used for the resampling in this call, etc.
\end{proof}

Now we need to show that the sequence of clauses $C_1,\ldots,C_N$ used for resampling can be described by less bits than $nN$ (the number of random bits used). Here we use the assumption saying each clause has at most $2^{n-3}$ neighbors and that the clauses $C'$ for which $\textsc{Fix}(C')$ is called from $\textsc{Fix}(C)$, are neighbors of $C$.

One could try to say that since $C_{i+1}$ is a neighbor of $C_i$, we need only $n-3$ bits to specify it (there are at most $2^{n-3}$ neighbors by assumption), so we save $3$ bits per clause (compared to $n$ random bits used by resampling). But this argument is wrong: $C_{i+1}$ is not always the neighbor of $C_i$, since we may return from a recursive call that causes resampling of $C_i$ and then make a new recursive call that resamples $C_{i+1}$. 

To get a correct argument, we should look more closely at the tree of recursive calls generated by one call $\textsc{Fix}(C)$ (Fig.~\ref{pic:treecall}). In this tree the sons of each vertex correspond to neighbor clauses of the father-clause.
\begin{figure}[h]
\begin{center}
\includegraphics[scale=1.0]{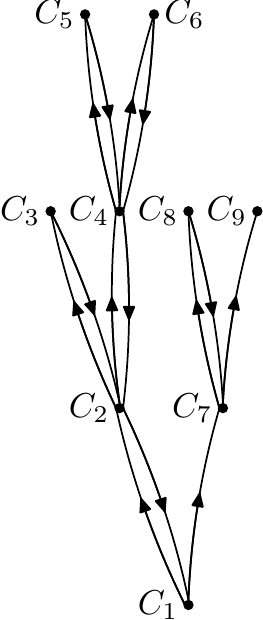}
\end{center}
\caption{The tree of recursive calls for $\textsc{Fix}(C_1)$ (up to some moment)}\label{pic:treecall}
\end{figure}
The sequence of calls is determined by a walk in this tree, but we go up and down, not only up (as we assumed in the wrong argument). How many bits we need to encode this walk (and therefore the sequence of calls)? We use one bit to distinguish between steps up and down. If we are going down, no other information is needed. If we are going up (and resample a new clause), we need one bit to say that we are going up, and $n-3$ bits for the number of neighbor we are going to. For accounting purposes we combine these bits with a bit needed to encode the step back (this may happen later or not happen at all), and we see that in total we need at most $(n-3)+1+1=n-1$ bits per each resampling. This is still less than $n$, so we save one bit for each resampling. If $N$ is much bigger than the number of variables, we indeed compress the sequence of random bits used for resampling, and this happens with exponentially small probability.

This argument finishes the proof.
\end{proof}

\section{Tetris and forbidden factors}

The next example is taken from word combinatorics. Assume that a list of binary strings $F_1,\ldots,F_k$ is given. These $F_i$ are considered as ``forbidden factors'': this means that we want to construct a (long) string $X$ that does not have any of $F_i$ as a factor (i.e., none of $F_i$ is a substring of $X$). This may be possible or not depending on the list. For example, if we consider two strings $0, 11$ as forbidden factors, every string of length $2$ or more has a forbidden factor (we cannot use zeros at all, and two ones are forbidden). 

The more forbidden factors we have, the more chances that they block the growth in the sense that every sufficiently long string has a forbidden factor. Of course, not only the number of factors matters: e.g., if we consider $0, 00$ as forbidden factors, then we have long strings of ones without forbidden factors. However, now we are interested in quantitative results of the following type: \emph{if the number of forbidden factors of length $j$ is $a_j$, and the numbers $a_j$ are ``not too big'', then there exists an arbitrarily long string without forbidden factors}.

This question can be analyzed with many different tools, including Lov\'asz local lemma (see~\cite{rumyantsev-ushakov}) and Kolmogorov complexity. Using a complexity argument, Levin proved that if $a_j=2^{\alpha j}$ for some constant $\alpha<1$, then there exists a constant $M$ and an infinite sequence that does not contain forbidden factors of length smaller than $M$.  (See~\cite[Section 8.5]{usv} for Levin's argument and other related results.) A nice sufficient condition was suggested by Miller~\cite{miller}: we formulate the statement for the arbitrary alphabet size.

\begin{proposition}
Consider an alphabet with $m$ letters. Assume that for each $j\ge 2$ we have $a_j$ ``forbidden'' strings of length $j$. Assume that there exist some constant $x>0$ such that
$$
\sum_{j\ge 2} a_j x^j < mx -1
$$
Then there exist arbitrarily long strings that do not contain forbidden substrings.
\end{proposition}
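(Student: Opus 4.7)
The plan is to mimic the Moser--Tardos / Fortnow approach of the previous section by introducing a ``tetris'' algorithm that builds $S$ letter by letter and analysing it via compression. Start with $S$ empty; at each step, sample a uniformly random letter from the $m$-letter alphabet, append it to $S$, and if the resulting string has a forbidden factor as a suffix, remove that entire suffix (with some canonical choice if more than one forbidden factor is created simultaneously). Every step is thus either a ``keep'' or a ``pop of a specific $F$''. Call the sequence of the first $N$ such operations the \emph{trace}. As in the CNF lemma, from the trace together with the final string $S_N$ one recovers all $N$ sampled letters by scanning backwards: at a pop step the sampled letter is the last letter of $F$ and the state before the step equals $S_i$ concatenated with the length-$(|F|-1)$ prefix of $F$, while at a keep step the sampled letter is the current last letter of $S_i$. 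Hence the probability that $|S_N|<L$ is at most the number of admissible (trace, $S$) pairs with $|S|<L$, divided by $m^N$.

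Bounding the number of traces is done by weighted counting tuned to the hypothesis parameter $x$. Assign to each possible operation a weight: $\alpha$ to ``keep'' and $\alpha x^{|F|}$ to each specific ``pop of $F$'', with $\alpha=1/(1+\sum_{j\ge 2}a_j x^j)$, so that per-step weights sum to $1$. Because the pop weight depends only on $|F|$, every length-$N$ trace that ends at height $\ell$ carries the same weight $\alpha^N x^{N-\ell}$, by a one-line computation using $\sum_j j\,n_j=N-\ell$, where $n_j$ counts pops of length-$j$ factors. The sum of weights over all $N$-step sequences of operations is the $N$-th power of the per-step sum, i.e.\ $1$, so the number of valid length-$N$ traces ending at height $\ell$ is at most $x^\ell\bigl[(1+\sum_j a_j x^j)/x\bigr]^N$. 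Multiplying by the at most $m^\ell$ strings $S$ of length $\ell$ and summing over $\ell<L$ bounds the number of bad (trace, $S$) pairs by
\[
L\,(mx)^L\,\Bigl[\frac{1+\sum_{j\ge 2} a_j x^j}{x}\Bigr]^{N}.
\]

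The hypothesis $\sum_{j\ge 2} a_j x^j<mx-1$ is precisely $(1+\sum_j a_j x^j)/x<m$, so dividing the display above by $m^N$ gives a bound that decays like $\rho^{-N}$ with $\rho=mx/(1+\sum_j a_j x^j)>1$. For any fixed $L$, taking $N$ large enough therefore drives the failure probability below $1$, so $|S_N|\ge L$ with positive probability; since $S_N$ is by construction free of forbidden factors, this furnishes the desired long string. The main non-routine step, and what I expect to be the chief obstacle, is the discovery of the weight scheme: the choice $\alpha x^{|F|}$ for pops is what simultaneously normalises the per-step weights to $1$ \emph{and} makes the trace weight depend only on $(N,\ell)$, and this is why the hypothesis drops out as exactly the condition for genuine compression.
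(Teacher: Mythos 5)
Your proof is correct and follows essentially the same route as the paper: you use the same ``tetris'' process, the same reconstruction lemma (the log of keeps/pops plus the final string determines all sampled letters), and your weight scheme is precisely the paper's arithmetic coding read in counting rather than bits language, with your weight $\alpha x^{j}$ for a specific length-$j$ pop matching the paper's $p_j/a_j = x^j/(mx)$ up to a global renormalization. The only cosmetic difference is that you normalize the per-step weights to sum to exactly $1$ and let the hypothesis appear as the contraction factor $(1+\sum_j a_j x^j)/(mx)<1$, whereas the paper keeps the sum strictly below $1$ and expresses the surplus as the amortization constant $c=-\log x$; these are the same inequality.
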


\textbf{Remarks}. 1. We do not consider $j=1$, since this means that some letters are deleted from the alphabet. 

2. By compactness the statement implies that there exists an infinite sequence with no forbidden factors.

3. The constant $x$ should be at least $1/m$, otherwise the right hand side is negative. This means that $a_j/m^j$ should be small, and this corresponds to our intution ($a_j$ should be significantly less than $m^j$, the total number of strings of length $j$).

The original proof from~\cite{miller} uses some ingenious potential function defined on strings: Miller shows that if its value is less than $1$, then one can add some letter preserving this property.  It turned out (rather misteriously) that exactly the same condition can be obtained by a completely different argument (following~\cite{goncalves,ochem}) ---
so probably the inequality is more fundamental than it may seem! This argument is based on compression. 

\begin{proof}
Here is the idea. We start with an empty string and add randomly chosen letters to its right end. If some forbidden string appears as a suffix, it is immediately deleted. So forbidden strings may appear only as suffixes, and only for a short time. After this ``backtracking'' we continue adding new letters. (This resembles the famous ``tetris game'' when blocks fall down and then disappear under some conditions.)

We want to show that if this process is unsuccessful in the sense that after many steps we still have a short string, then the sequence of added random letters is compressible, so this cannot happen always, and therefore a long string without forbidden factors exists. Let us consider a ``record'' (log file) for this process that is a sequence of symbols ``$+$'' and ``$+\langle\text{deleted string}\rangle$'' (for each forbidden string we have a symbol, plus one more symbol without a string). If a letter was added and no forbidden string appears, we just add `$+$' to the record. If we have to delete some forbidden string $s$ after a letter was added, we write this string in brackets after the $+$ sign. Note that we do \emph{not} record the added letters, only the deleted substrings. (It may happen that several forbidden suffixes appear; in this case we may choose any of them.)

\begin{lemma}
At every stage of the process the current string and the record uniquely determine the sequence of random letters used.
\end{lemma}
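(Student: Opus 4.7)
The plan is to proceed by reverse induction on the length of the record, undoing one entry at a time and recovering, at each step, both the most recently appended random letter and the state of the string immediately before that letter was added. The base case is the empty record: the current string must be empty and no random letters have been consumed.

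For the inductive step, let $S$ denote the current string and inspect the last entry of the record. First I would handle the simpler case: if the entry is just `$+$', then the last operation appended a single random letter without triggering any deletion, so the predecessor string is $S$ with its final symbol removed and the appended letter is precisely that final symbol. Next I would handle the deletion case: if the entry is `$+\langle s\rangle$' for some forbidden word $s$ (which has length at least $2$, since $j=1$ is excluded), then the last operation appended a random letter after which the suffix $s$ of the resulting string was erased. Hence immediately after the append the string was $Ss$ (concatenation), the appended letter is the last symbol of $s$, and the predecessor string is $Ss'$, where $s'$ is $s$ with its last symbol stripped.

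In either case the predecessor string and the appended letter are uniquely determined from $S$ and the last record entry, so stripping that entry leaves a strictly shorter record together with a uniquely determined new ``current'' string, and the inductive hypothesis finishes the reconstruction of the remaining random letters.

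I do not expect any genuine obstacle: the record is designed precisely to carry, per step, exactly the extra information that the string loses, namely whether a deletion occurred and, if so, which forbidden word was cut off. The only points worth noting are that `$+$' and `$+\langle s\rangle$' are distinguishable as record tokens, so the case analysis is unambiguous, and that forbidden words have length at least $2$, so in the deletion case there is a well-defined last symbol of $s$ to read off as the random letter and a nonempty prefix $s'$ to restore onto $S$.
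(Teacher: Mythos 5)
Your proof is correct and takes essentially the same approach as the paper: the paper's argument is precisely ``go backwards through the record, re-attaching the bracketed forbidden string when a deletion step is undone and reading off the removed final symbol as the random letter,'' and your reverse induction is just a more fully spelled-out version of that backward reconstruction.
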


\begin{proof}[Proof of the lemma] Having this information, we can reconstruct the configuration going backwards. This reversed process has steps where a forbidden string is added (and we know which one, since it is written in brackets in the record), and also steps when a letter is deleted (and we know which letter is deleted, i.e., which random letter was added when moving forwards).
\end{proof}

If after many (say, $T$) steps we still have a short current string, then the sequence of random letters can be described by the record (due to the Lemma; we ignore the current string part since it is short). As we will see, the record can be encoded with less bits than it should have been (i.e., less than $T\log m$ bits). Let us describe this encoding  and show that it is efficient (assuming the inequality $\sum a_j x^j< mx-1$). 

We use arithmetic encoding for the lengths. Arithmetic encoding for $M$ symbols starts by choosing positive reals $q_1,\ldots,q_M$ such that $q_1+\ldots+q_M=1$. Then we split the interval $[0,1]$ into parts of length $q_1,\ldots,q_M$ that correspond to these $M$ symbols. Adding a new symbol corresponds to splitting the current interval in the same proportion and choosing the right subinterval. For example, the sequence $(a,b)$ corresponds to $b$th subinterval of $a$th interval; this interval has length $q_aq_b$. The sequence $(a,b,\ldots, c)$ corresponds to interval of length $q_aq_b\ldots q_c$ and can be reconstructed given any point of this interval (assuming $q_1,\ldots,q_M$ are fixed); to specify some binary fraction in this interval we need at most $-\log (q_aq_b\ldots q_c)+O(1)$ bits, i.e., $-\log q_a-\log q_b-\ldots-\log q_c+O(1)$ bits.

Now let us apply this technique to our situation. For $+$ without brackets we use $\log (1/p_0)$ bits, and for $+\langle s\rangle$  where $s$ is of length $j$, we use $\log (1/p_j)+\log a_j$ bits. Here $p_j$ are some positive reals to be chosen later; we need $p_0+\sum p_j =1$. Indeed, we may split $p_j$ into $a_j$ equal parts (of size $p_j/a_j$) and use these parts as $q_s$ in the description of arithmetical coding above;  splitting adds $\log a_j$ to the code length for strings of length $j$.

To bound the total number of bits used for encoding the record, we perform amortised accounting and show that the average number of bits per letter is less than $\log m$. Note that the number of letters is equal to the number of $+$ signs in the record. Each $+$ without brackets increases the length of the string by one letter, and we want to use less that $\log m - c$ bits for its encoding, where $c>0$ is some constant saying how much is saved as a reserve for amortized analysis.  And $+\langle s\rangle$ for a string $s$ of length $j$ decreases the length by $j-1$, so we want to use less than $\log m+c(j-1)$ bits (using the reserve).

So we need:
\begin{align*}
\log (1/p_0) &< \log m - c;\\
\log (1/p_j)+\log a_j &< \log m +c(j-1)
\end{align*}
together with 
$$
p_0+\sum_{j\ge 2} p_j =1.
$$

Technically is it easier to use non-strict inequalities in the first two cases and a strict one in the last case (and then increase $p_i$ a bit):
$$
\log (1/p_0) \le \log m - c;\ \log (1/p_j)+\log a_j \le \log m +c(j-1);\ p_0+\sum_{j\ge 2} p_j <1.
$$
Then for a given $c$ we take minimal possible $p_i$:
\begin{align*}
p_0 &= \frac{1}{m 2^{-c}}\\
p_j& = \frac{a_j (2^{-c})^j}{m2^{-c}}
\end{align*}
and it remains to show that the sum is less than $1$ for a suitable choice of $c$. Let $x=2^{-c}$, then the inequality can be rewritten as
$$
  \frac{1}{mx}+\sum_{j\ge 2} \frac{a_jx^j}{mx}< 1,
$$
or
$$
 \sum_{j\ge 2} a_jx^j < mx-1,
$$
and this is our assumption.

Now we see the role of this mystical $x$ in the condition: it is just a parameter that determines the constant used for the amortised analysis.
\end{proof}

\textbf{Acknowledgement}.
Author thanks his LIRMM colleagues, in particular Pascal Ochem and Daniel Gon\c calves, as well as the participants of Kolmogorov seminar in Moscow.

\section*{Appendix}

There is one more sufficient condition for the existence of arbitrarily long sequences that avoid forbidden substrings. Here is it.\footnote{A more general algebraic fact about ideals in a free algebra with $m$ generators is sometimes called Golod theorem; N.~Rampersad in \url{https://arxiv.org/pdf/0907.4667.pdf} gives a reference to Rowen's book (L.~Rowen, Ring Theory, vol.~II, Pure and Applied Mathematics, Academic Press, Boston, 1988, Lemma~6.2.7). This more general statement concerns ideals generated not necessarily by strings (products of generators), but by arbitrary uniform elements. The original paper is: 
\begin{otherlanguage*}{russian}
Е.С.\,Голод, И.Р.\,Шафаревич, О башне полей классов, Известия АН СССР, серия математическая, 1964, 
\end{otherlanguage*}
\textbf{28}:2, 261--272, 
\url{http://www.mathnet.ru/links/f17df1a72a73e5e73887c19b7d47e277/im2955.pdf}.} 
If the power series for
$$
\frac{1}{1-mx+a_2x^2+a_3x^3+\ldots}
$$
(where $a_i$ is the number of forbidden strings of length $m$) has all positive coefficients, then there exist arbitrarily long strings withour forbidden substrings. Moreover, in this case the number of $n$-letter strings without forbidden substrings is at least $g_n$, where $g_n$ is the $n$th coefficient of this inverse series. 

To prove this result, consider the number $s_k$ of allowed strings of length $k$.  It is easy to see that
$$
s_{k+1}\ge s_km - s_{k-1}a_2 - s_{k-2}a_3-\ldots - s_1a_k -s_0a_{k+1}.
$$   
Indeed, we can add each of $m$ letters to each of $s_k$ strings of length $k$, and then we should exclude the cases where there is a forbidden string at the end. This forbidden string may have length $2$, then there are at most $s_{k-1}a_2$ possibilities, or length $3$, there are at most $s_{k-2}a_3$ possibilities, etc. (Note that $s_0=1$ and $s_1=m$; note also that we can get a string with two forbidden suffixes, but this is OK, since we have an inequality.) These inequalities can be rephrased as follows: the product
$$
(1+mx+s_2x^2+s_3x^3+\ldots)(1-mx+a_2x^2+a_3x^3+\ldots)
$$
has only non-negative coefficients. Denote the second term by $A$; if 
$$1/A=1+mx+g_2x^2+g_3x^3+\ldots$$ 
has only positive coefficients $g_i$, (as our assumption says), then the first term is a product of two series with non-negative coefficients. The first factor ($1/A$) starts with $1$, so the $n$th coefficient of a product, i.e., $s_n$, is not less than $n$th coefficient of the second factor, i.e., $g_n$.

Surprisingly, this condition is closely related to the one considered above, as shown by Dmitry Piontkovsky (his name has a typo in the publication: 
\begin{otherlanguage*}{russian}
Д.И.\,Пиотковский, О росте градуированных алгебр с небольшим числом определяющих соотношений, УМН,%
\end{otherlanguage*}
1993,\textbf{48}:3(291), 199--200, \url{http://www.mathnet.ru/links/6034910939adb12fff0cd8fb9745dfc8/rm1307.pdf}): 

\begin{proposition}
Series
$$
\frac{1}{1-mx+a_2x^2+a_3x^3+\ldots}
$$
has all positive coefficients if and only if the series in the denominator has a root on a positive part of real line.
\end{proposition}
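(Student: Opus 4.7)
My plan is to prove the two implications separately, making essential use of the fact that $A(x) = 1 - mx + \sum_{j\ge 2} a_j x^j$ has exactly one negative coefficient ($-m$) and satisfies $A(0) = 1$.

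For the direction starting from the existence of a positive root, I would let $r > 0$ be the smallest positive root of $A$ and factor $A(x) = (1 - x/r)\,\tilde A(x)$ as formal power series. A direct recursion gives $\tilde a_n r^n = \sum_{k=0}^n c_k r^k$, where $c_k$ denotes the $k$th coefficient of $A$. Since $A(r) = \sum_{k\ge 0} c_k r^k = 0$, this partial sum equals $-\sum_{k>n} c_k r^k$; for $n \ge 1$ only coefficients $c_k = a_k \ge 0$ with $k\ge 2$ appear in the tail, so $\tilde a_n \le 0$ for $n \ge 1$ while $\tilde a_0 = 1$. Hence $\tilde A(x) = 1 - E(x)$ with $E$ a power series having non-negative coefficients and $E(0) = 0$. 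Then
$$
\frac{1}{A(x)} \;=\; \frac{1}{1 - x/r}\,\cdot\,\frac{1}{1 - E(x)}
$$
is a product of two series with non-negative coefficients, the first with all coefficients $1/r^n > 0$ and the second with constant term $1$; taking the diagonal contribution $(1/r^n)\cdot 1$ to $[x^n]$ shows that every coefficient of $1/A$ is strictly positive.

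For the converse, I would assume $B = 1/A = \sum g_n x^n$ has $g_n > 0$ for all $n$, with radius of convergence $R$. By Pringsheim's theorem the point $z = R$ is a genuine singularity of $B$. Such a singularity cannot occur where $A$ is analytic and nonzero, so either $A(R) = 0$ (the desired positive root), or $R$ coincides with the radius of convergence $\rho_A$ of $A$. In the latter case I set $L = \lim_{x\to R^-} B(x) \in (1, \infty]$: since $\tilde A := A + mx$ has non-negative coefficients, if $L = \infty$ then $A(x)\to 0$ and $\tilde A(x) \to mR < \infty$, so monotone convergence gives convergence of $A$'s series at $R$ with $A(R) = 0$, again producing the positive root.

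The main obstacle will be the remaining sub-case $R = \rho_A$ with $L < \infty$, in which $A$ would be strictly positive on $[0, R]$ with only a mild boundary singularity at $R$ (of branch-point or logarithmic type, forced by the non-negativity of $\tilde A$'s coefficients). To rule this out, I would analyse the local singularity type of $A$ at $R$ and argue that the resulting behaviour of $1/A$ cannot yield only strictly positive coefficients via the recursion $g_n = m g_{n-1} - \sum_{j=2}^n a_j g_{n-j}$. Concrete examples such as $a_j = 2^j/j^2$ supported on $j = 2^k$, where $A$ stays positive on $[0, 1/2]$ but $g_4 = 0$, support the expected contradiction; producing a clean general argument for this sub-case is the technically delicate step of the proof.
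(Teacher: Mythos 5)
Your first direction (positive root $\Rightarrow$ all coefficients of $1/A$ positive) is correct and takes a genuinely different route from the paper. The paper runs the long-division recursion, maintaining by induction that each remainder $R^{(k)}$ evaluates to $1$ at the root $\alpha$, has nonpositive coefficients beyond the leading one, and hence a strictly positive leading coefficient. You instead factor $A = (1-x/r)\tilde A$ and use $A(r)=0$ to rewrite $\tilde a_n r^n = \sum_{k\le n} c_k r^k = -\sum_{k>n} c_k r^k \le 0$ for $n\ge 1$ (since for $k\ge 2$ the $c_k=a_k$ are nonnegative), so $\tilde A = 1-E$ with $E\ge 0$; then $1/A = (1-x/r)^{-1}(1-E)^{-1}$ is a product of a strictly positive series and a nonnegative one with constant term $1$, giving $[x^n](1/A)\ge 1/r^n>0$. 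This is a clean, correct alternative that makes the same sign structure visible more directly than the remainder bookkeeping in the paper.

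Your second direction, as you yourself flag, is not complete: the subcase $R=\rho_A$ with $L<\infty$ (both $A$ and $B=1/A$ converging at a common finite radius with $A(R)>0$) is left unresolved. For fair comparison: the paper's argument in this direction relies on the assertion that a power series with finitely many negative coefficients tends to $+\infty$ as the argument approaches the radius of convergence from below, which is exactly what is needed to rule out your stuck subcase --- but that assertion is not true in general for nonnegative-coefficient series (e.g.\ $\sum_{n\ge1}x^n/n^2$ has radius $1$ and a finite sum at $x=1$), so the paper is gliding over precisely the point you are having trouble with. What is missing in both accounts is some extra structure of $A$; one natural candidate you have not used is convexity, since $A''(x)=\sum_{j\ge2}j(j-1)a_j x^{j-2}\ge0$ on $[0,R)$ while $A(0)=1$, $A'(0)=-m<0$, which tightly constrains how $A$ can stay positive on $[0,R]$. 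As written, though, neither your singularity-type heuristic nor your concrete example closes the gap, so this half of the proposal should be regarded as a plan rather than a proof.
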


\begin{proof}
Assume that the series in the denominator does not have a root, but the inverse series has all positive coefficients. In fact, non-negative coefficients are enough to get a contradiction. For a series with all non-negative coefficients, or with finitely many negative coefficients, the radius of convergence is determined by behavior of the sum on the real line: when the argument approaches the convergence radius, the sum of the series goes to infinity.  Now we have the product of two series 
$$
 (1-mx+a_2x^2+a_3x^3+\ldots)(1+mx+g_2x^2+g_3x^3+\ldots)=1
$$
that is equal to $1$. One of these series should have finite convergence radius, otherwise both are everywhere defined and the product is everywhere $1$, but both are large for large $x$. Look at the minimal convergence radius (of two); one of the series goes to infinity near the corresponding point on the real line, so the other one converges to zero, so it has bigger convergence radius and reaches zero at the real line. Finally, note that only the first factor (the denominator) may have a zero, since the other one has all non-negative coefficients.

Now assume that the denominator has a zero; we have to prove that the inverse series has only positive coefficients. In general, the following result is true (D.~Piontkovsky): \emph{if the series 
$$
A=a_0+a_1x+a_2x^2+\ldots
$$
has $a_0>0$, and $a_2,a_3,\ldots\ge 0$, and for some positive $x$ this series converges to $0$, then the inverse series has all positive coefficients.} To prove this statement, let $\alpha$ be the root, so $A(\alpha)=0$. Recall the long division process that computes the inverse series. It produce the sequence of remainders: the first $R_0$ is $1$; then we subtract from the $k$th remainder
$$
R^{(k)}=R^{(k)}_k x^k+ R^{(k)}_{k+1} x^{k+1}+\ldots
$$
the product $(R^{(k)}/a_0)Ax^k$ to cancel the first term, and get the next remainder $R^{(k+1)}$. By induction we prove that for each remainder $R^{(k)}$:
\begin{itemize}
\item $R^{(k)}(\alpha)=1$;
\item all the coefficients $R^{(k)}_{k+1}$, $R^{(k)}_{k+2},\ldots$, except the first one, are negative or zeros;
\item the first coefficient $R^{(k)}_k$ is positive.
\end{itemize}

The first claim is true, since it was true for $R^{(k-1)}$ by induction assumption and we subtract a series that equals zero at $\alpha$. 

The second claim: by induction assumption the first coefficient in $R^{(k-1)}$ was positive, so we subtract the series $(R^{(k-1)}/a_0)Ax^{k-1}$ with positive first and non-negative third, fourth, etc. coefficients. The first term cancels the first term in $R^{(k-1)}$, the second term does not matter now, but all the subsequent coefficients are negative or zeros, since we subtract non-negative coefficients from non-positive ones.

Finally, the third claim is the consequence of the first two: if the sum is positive (equal to $1$) and all the terms except one are non-positive, then the remaining term is positive.

Therefore, all coefficients in the inverse series are positive.
\end{proof}

Note that we have shown that if all coefficients of the series $1/A$ are non-negative, then they are positive. Also note that we get a bit stronger result compared to the entropy argument where we required the series to reach a negative value (now the zero value is enough).


\begin{thebibliography}{9}

\bibitem{alon-spencer} N.~Alon, J.H.~Spencer, \emph{The Probabilistic Method}, Wiley, 2004.

\bibitem{goncalves} D.~Gon\c calves, M.~Montassier, A.~Pinlou, \emph{Entropy compression method applied to graph colorings}, \url{https://arxiv.org/pdf/1406.4380.pdf}.

\bibitem{miller} J.~Miller, Two notes on subshifts, \emph{Proceedings of the AMS}, \textbf{140}, 1617-1622 (2012).

\bibitem{moser}  R.~Moser, \emph{A constructive proof of the Lov\'asz local lemma}, \url{https://arxiv.org/abs/0810.4812}.

\bibitem{moser-tardos} R.~Moser,  G.~Tardos, A constructive proof of the general Lov\'asz local lemma, \emph{Journal of the ACM},  \textbf{57}(2), 11.1--11.15 (2010).

\bibitem{ochem} P.~Ochem, A.~Pinlou, Application of Entropy Compression in Pattern Avoidance, \emph{The Electronic Journal of Combinatorics}, \textbf{21}:2, paper P2.7 (2014).

\bibitem{rumyantsev-shen} A.~Rumyantsev, A.~Shen, Probabilistic Constructions of Computable Objects and a Computable Version of Lovász Local Lemma, \emph{Fundamenta Informaticae}, \textbf{132}, 1--14 (2013), see also \url{https://arxiv.org/abs/1305.1535}

\bibitem{rumyantsev-ushakov} A.~Rumyantsev, M.~Ushakov, Forbidden substrings, Kolmogorov complexity and almost periodic sequences, \emph{STACS 2006 Proceedings}, Lecture Notes in Computer Science, \textbf{3884}, 396--407, see also \url{https://arxiv.org/abs/1009.4455}.

\bibitem{usv} A.~Shen, V.A.~Uspensky, N.~Vereshchagin, \emph{Kolmogorov complexity and algorithmic randomness}, to be published by the AMS, \url{www.lirmm.fr/~ashen/kolmbook-eng.pdf}. (Russian version published by MCCME (Moscow), 2013.)

\end{thebibliography}
\end{document}